\algrenewcommand\algorithmicrequire{\textbf{Precondition:}}
\algrenewcommand\algorithmicensure{\textbf{Postcondition:}}
\newcommand{\newC}{C$^-$\xspace}
\newcommand*\Let[2]{\State #1 $\gets$ #2}
\newcommand{\kalashnikov}{{\sc Kalashnikov}\xspace}
\title{Second-Order Propositional Satisfiability}
\author{Cristina David \and Daniel Kroening \and Matt Lewis}
\institute{University of Oxford}
\newenvironment{keywords}{
       \list{}{\advance\topsep by0.35cm\relax\small
       \leftmargin=0cm
       \labelwidth=0.35cm
       \listparindent=0.35cm
       \itemindent\listparindent
       \rightmargin\leftmargin}\item[\hskip\labelsep
                                     \bfseries Keywords:]}
     {\endlist}
\begin{document}
\maketitle
\pagestyle{headings}  

\begin{abstract}
Fundamentally, every static program analyser searches for a proof through a combination of heuristics providing candidate solutions and a candidate validation technique. Essentially, the heuristic reduces a second-order problem to a first-order/propositional one, while the validation is often just a call to a SAT/SMT solver. This results in a monolithic design of such analyses that conflates the formulation of the problem with the solving process. Consequently, any change to the latter causes changes to the whole analysis. This design is dictated by the state of the art in solver technology. While SAT/SMT solvers have experienced tremendous progress, there are barely any second-order solvers. This paper takes a step towards addressing this situation by proposing a decidable fragment of second-order logic that is still expressive enough to capture numerous program analysis problems (e.g. safety proving, bug finding, termination and non-termination proving, superoptimisation). We refer to the satisfiability problem for this fragment as Second-Order SAT and show it is NEXPTIME-complete. Finally, we build a decision procedure for Second-Order SAT based on program synthesis and present experimental evidence that our approach is tractable for program analysis problems.
\end{abstract}

\begin{keywords}
 Second-order logic, decision procedure, program synthesis,  bitvectors.
\end{keywords}

\section{Introduction}
Fundamentally, every static program analysis is searching for a {\em program proof}.
For safety analysers this proof takes the form of 
a program invariant \cite{DBLP:conf/popl/CousotC77}, for bug finders it's a counter-model \cite{DBLP:conf/dac/ClarkeKY03}, for termination analysis it 
can be a ranking function \cite{Floyd67}, whereas for non-termination it's a recurrence set \cite{DBLP:conf/popl/GuptaHMRX08}.
Finding each of these proofs was subject to extensive research resulting in a multitude of techniques.

The process of searching for a proof can be roughly seen as a refinement loop with 
two phases. One phase is heuristic in nature, e.g. 
adjusting the unwinding depth (for bounded model checking \cite{DBLP:conf/dac/ClarkeKY03}),
refining the set of predicates (for predicate abstraction and interpolation \cite{DBLP:conf/cav/ClarkeGJLV00,DBLP:conf/cav/McMillan06}),
selecting a template (for template-based analyses \cite{dblp:conf/tacas/leikeh14}),
applying a widening operator (for abstract interpretation based techniques \cite{DBLP:conf/popl/CousotC77}),
whereas the other phase usually involves a call to a decision procedure. 
From the perspective of the proof,
the heuristic constrains the universe of potential proofs to just one candidate 
(by fixing the unwinding bound, the template, the set of predicates, etc), which 
is then validated by the other phase.
The unknowns in the first phase are proofs (second-order entities), 
whereas the unknowns in the second phase are program variables.
Essentially, the first phase reduces a second-order problem to a first-order/propositional one.

Such a design makes it difficult to separate the problem's formulation (the second-order problem)
from parts of the solving process, resulting in 
analyses that are cluttered and fragile. 
Any change to the search process causes changes to the whole analysis.
Ideally, we would like a modular design, where the search for a solution is encapsulated and, thus, completely separated from the formulation of the problem.

The existing design is dictated by the state of the art in solver technology.
While the SAT/SMT technologies nowdays allow solving industrial sized instances, 
there is hardly any progress made for second-order solvers.


\paragraph{Aim of current paper.}
In this paper, we take a step towards addressing this situation by presenting a fragment of second-order logic 
that is decidable, while still expressive enough to capture static analysis problems, such as safety proving, bug finding, termination and
non-termination proving, superoptimisation, etc. 
Decidability is regained by allowing only existential second-order quantification 
and interpreting ground terms over a finite universe.  
We call the resulting problem Second-Order SAT 
and show it is NEXPTIME-complete. 

By using finite state program synthesis, we build a solver for Second-Order SAT. 
A very important property of our solver is that its runtime is heavily influenced by the {\em length of the shortest
proof}, i.e. the Kolmogorov complexity of the problem as we will discuss in section~\ref{sec:kalashnikov-proofs}.
If a short proof exists, then the solver will find it quickly. 
This is particularly useful for program analysis problems,
where, if a proof exists, then most of the time many proofs exist and some are short 
(\cite{DBLP:conf/aplas/KongJDWY10} relies on a similar remark about loop invariants).
Our experiments show that for many problems the shortest solution is often very short,
which means that we avoid the NEXPTIME bound, so our approach is tractable far more often than one might reasonably expect.

In the same way that SAT solvers enabled a leap forwards by presenting a unified interface to an NP-complete problem,
Second-Order SAT presents a unified interface for NEXPTIME problems.  For program analysers, this means that the business of finding a solution (running a refinement loop) 
can be offloaded to a black box solver.
We advocate for using second-order logic for formulating static analysis problems, and hope that
this work will encourage researchers in this direction.
Analyses formulated this way will automatically benefit from future advances in second-order solving.

\paragraph{Related logics.}
Other second-order solvers are introduced in \cite{DBLP:conf/pldi/GrebenshchikovLPR12,DBLP:conf/cav/BeyenePR13}. As opposed to Second-Order SAT,
these are specialised for Horn clauses and the logic they handle is undecidable.
Wintersteiger et al. present in \cite{DBLP:conf/fmcad/WintersteigerHM10} a decision procedure for  
a logic related to Second-Order SAT, the Quantified bit-vector logic, which is  
a many sorted first-order logic formula where the sort of every variable is a bit-vector sort. 

\paragraph{Technical Contributions:}
\begin{itemize}
 \item We define the Second-Order SAT problem and show that it is NEXPTIME-complete.
 \item We show that Second-Order SAT is expressive enough to encode several program analysis problems.
 \item We build a decision procedure for Second-Order SAT via a reduction to finite state program synthesis. 
The decision procedure uses a novel combination of symbolic model checking, explicit state model checking and
stochastic search, and its runtime is dependent on the length of the shortest proof.
 \item We evaluated our solver on several static analysis problems.
\end{itemize}

\section{Preliminaries}
\label{sec:kalashnikov-prelim}

In this section we will recall some well known decision problems along with their
associated complexity classes.  
\begin{definition}[Propositional SAT]
\label{def:sat}
 \[
  \exists x_1 \ldots x_n . \sigma
 \]
 Where the $x_i$ range over Boolean values and $\sigma$ is a quantifier-free propositional formula
 whose free variables are the $x_i$.
\end{definition}

Checking the truth of an instance of Definition~\ref{def:sat} is NP-complete.

\begin{definition}[First-Order Propositional SAT or QBF]
\label{def:qbf}
 \[
  Q_1 x_1 . Q_2 x_2 \ldots Q_n x_n . \sigma
 \]
 Where the $Q_i$ are either $\exists$ or $\forall$.  The $x_i$ and $\sigma$ are as in
 Definition~\ref{def:sat}.
\end{definition}

Checking the truth of an instance of Definition~\ref{def:qbf} is PSPACE-complete.

\section{Second-Order SAT}

In this section, we introduce Second-Order SAT, an extension of propositional SAT. Subsequently, we prove 
that Second-Order SAT is NEXPTIME-complete.


\begin{definition}[Second-Order SAT]
\label{def:2sat}
 \[
  \exists S_1 \ldots S_m . Q_1 x_1 \ldots Q_n x_n . \sigma
 \]
 Where the $S_i$ range over predicates.  Each $S_i$ has an associated arity $\mathrm{ar}(S_i)$
 and $S_i \subseteq \mathbb{B}^{\mathrm{ar}(S_i)}$.  The remainder of the formula
 is an instance of Definition~\ref{def:qbf}, except that the quantifier-free part ($\sigma$)
 may refer to both the first-order variables $x_i$ and the second-order variables $S_i$.
\end{definition}

\begin{example}
 The following is a Second-Order SAT formula:

 \[
  \exists S . \forall x_1, x_2 . S(x_1, x_2) \rightarrow S(x_2, x_1)
 \]

This formula is satisfiable and is satisfied by any symmetric relation.
\end{example}

\begin{theorem}[Fagin's Theorem~\cite{fagin}]
\label{thm:fagin}
 The class of structures $A$ recognisable in time $|A|^k$, for some $k$, by a nondeterministic Turing machine
 is exactly the class of structures definable by existential second-order sentences.
\end{theorem}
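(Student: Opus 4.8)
The plan is to establish the two inclusions separately, showing that the existential second-order definable classes of structures coincide with those recognised in nondeterministic time $|A|^k$. For the \emph{soundness} direction ($\exists$SO $\subseteq$ NP), consider a sentence $\exists S_1 \ldots \exists S_m\, \varphi$ with $\varphi$ first-order over the vocabulary of $A$ together with the $S_i$. On input structure $A$ with universe of size $n = |A|$, a nondeterministic machine first guesses each relation $S_i$: as a table of $n^{\mathrm{ar}(S_i)}$ bits this guess is polynomial in $n$. It then verifies $A \models \varphi$ under this interpretation. Model checking a fixed first-order sentence against a finite structure runs in time polynomial in $n$ (each quantifier contributes a loop over the $n$ universe elements), so the whole procedure runs in nondeterministic time $|A|^k$ for a $k$ depending on $\varphi$.

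For the \emph{completeness} direction (NP $\subseteq$ $\exists$SO), fix a nondeterministic Turing machine $M$ recognising the class in time $n^k$. I would encode an accepting run of $M$ on $A$ as a tableau of width and height $n^k$, recording for each time step and tape position the symbol written, the head location, and the state. The crucial device is addressing: first-order variables range only over the $n$ universe elements, so I index the $n^k$ time steps and positions by $k$-tuples of elements, of which there are exactly $n^k$. The sentence then existentially quantifies a binary relation $<$ on the universe together with the predicates recording the tableau, for instance $T_\sigma(\bar p, \bar t)$ (``position $\bar p$ holds symbol $\sigma$ at time $\bar t$''), $H(\bar p, \bar t)$, and $Q_s(\bar t)$, where $\bar p$ and $\bar t$ are blocks of $k$ variables.

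The first-order body $\varphi$ must then assert three things. First, that $<$ is a linear order; from a linear order the induced lexicographic order on $k$-tuples and the successor relation on tuples are first-order definable, supplying the coordinate arithmetic needed to talk about ``the next cell'' and ``the next time step''. Second, that the time-zero row correctly encodes the input structure $A$, reading the relations of $A$ off into the initial tape contents under a fixed string encoding of structures. Third, that every pair of consecutive rows is consistent with a transition of $M$, checked locally through constant-size windows in the Cook--Levin style, and that some row is accepting. The nondeterministic choices of $M$ are absorbed into the existential guess of the tableau predicates.

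The main obstacle will be the purely first-order manipulation of tuple addresses: defining successor on $k$-tuples with the correct carry behaviour across block boundaries and handling the extremal tuples, all without built-in arithmetic and using only the guessed order $<$. Wiring the initial row to mirror the relations of $A$ under the chosen encoding is the other fiddly point, since it ties the abstract tableau to the concrete vocabulary of $A$. Once addressing and the initial configuration are in place, the local transition windows and the accepting condition follow the standard tableau argument, and collecting every guessed relation under a block of existential second-order quantifiers yields the required $\exists$SO sentence.
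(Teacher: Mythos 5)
The paper offers no proof of this statement at all---it is Fagin's theorem, imported by citation~\cite{fagin} and used as a black box in the NEXPTIME-completeness argument---so there is no internal proof to compare against. Your sketch is the canonical argument from the literature (guess-and-verify plus polynomial-time first-order model checking for the $\exists$SO-to-NP direction; a Cook--Levin-style tableau addressed by $k$-tuples of universe elements over an existentially guessed linear order for the converse), and it is correct in outline, with the points you flag (first-order successor arithmetic on $k$-tuples, tying the initial row to the encoding of $A$) being exactly where the standard expositions spend their technical effort.
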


\begin{theorem}[Second-Order SAT is NEXPTIME-complete]
 For an instance of Definition~\ref{def:2sat} with $n$ first-order variables,
 checking the truth of the formula is NEXPTIME-complete.
\end{theorem}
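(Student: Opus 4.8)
The proof splits into the standard two directions: containment in NEXPTIME and NEXPTIME-hardness. For containment, I would observe that a witness for the existential second-order quantifiers is nothing more than a truth table for each predicate $S_i$: since $S_i \subseteq \mathbb{B}^{\mathrm{ar}(S_i)}$, its full interpretation is specified by $2^{\mathrm{ar}(S_i)}$ bits, so the total witness has size $\sum_{i=1}^{m} 2^{\mathrm{ar}(S_i)}$, which is at most exponential in the length of the formula (each application $S_i(\ldots)$ occurring in $\sigma$ lists $\mathrm{ar}(S_i)$ arguments, so the arities are bounded by the input size). A nondeterministic machine guesses all these tables in exponential time. Once the $S_i$ are fixed, the remaining formula $Q_1 x_1 \ldots Q_n x_n . \sigma$ is an ordinary QBF over $n$ Boolean variables, and I would evaluate it deterministically by recursing over all $2^n$ assignments, evaluating $\sigma$ at each leaf with the now-constant predicates resolved by table lookup. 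This costs time $2^n \cdot \mathrm{poly}$, i.e. deterministic exponential time, which lies comfortably inside NEXPTIME; note that since we can afford exponential time, the quantifier alternation in the first-order part is no obstacle. Hence the whole check is in NEXPTIME.

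For hardness, the key is to invoke Fagin's Theorem (Theorem~\ref{thm:fagin}) at an exponentially large domain. The guiding idea is that $n$ Boolean first-order variables implicitly range over the set $\{0,1\}^n$ of $n$-bit strings, a domain of size $2^n$; a relation of arity $r$ over this domain is a subset of $(\{0,1\}^n)^r = \mathbb{B}^{nr}$ and is therefore exactly a Second-Order SAT predicate of arity $nr$. Given an arbitrary language $L \in$ NEXPTIME decided by a nondeterministic machine $M$ in time $2^{q(|w|)}$, I would set $n = q(|w|)$ so that the domain $\{0,1\}^n$ is large enough to address every cell and step of the computation. The property ``$M$ accepts $w$'' is then recognisable in time polynomial in the domain size $2^n$, so by Fagin's Theorem there is a \emph{fixed} existential second-order sentence $\exists R_1 \ldots R_k . \psi$, with $\psi$ first-order, that holds on the structure $\mathcal{A}_w$ encoding $w$ exactly when $M$ accepts $w$. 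I would translate this sentence into Second-Order SAT mechanically: each domain relation $R_j$ of arity $r$ becomes a predicate of arity $nr$; each first-order quantifier over a domain element becomes a block of $n$ Boolean quantifiers, so that $\psi$ turns into the QBF prefix $Q_1 x_1 \ldots Q_n x_n$; and the atomic formulas are rewritten over the individual bits.

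The crux of this direction --- and the step I expect to be the main obstacle --- is discharging the auxiliary vocabulary. Fagin's Theorem presupposes access to a linear order and the attendant numeric predicates, and it requires the input structure $\mathcal{A}_w$ to be represented. Over the explicit domain $\{0,1\}^n$ this is manageable but must be done with care: the order, successor, bit-extraction and arithmetic on $n$-bit strings are all first-order definable (indeed quantifier-free, modulo a few auxiliary bits) directly from the Boolean components of the variables, with formulas of size polynomial in $n$, and these fold straight into $\sigma$. The input $w$, whose length satisfies $|w| \le 2^n$, can be hard-wired into $\sigma$ by clauses constraining the input predicate on the first $|w|$ elements of the domain, which costs only polynomially many clauses. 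I would then verify that every piece of the construction --- the number of Boolean variables $n = q(|w|)$, the arities $nr$, and the size of $\sigma$ --- is polynomial in $|w|$, so that the reduction runs in polynomial time. Combining the two directions yields NEXPTIME-completeness.
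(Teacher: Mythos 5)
Your proposal is correct, and its crux---invoking Fagin's Theorem (Theorem~\ref{thm:fagin}) over the exponentially large universe $\{0,1\}^n$ of interpretations of the $n$ Boolean first-order variables---is exactly the idea behind the paper's proof. The difference is one of decomposition and rigour. The paper compresses both directions into a single appeal to Fagin: since the universe has size $2^n$, Definition~\ref{def:2sat} ``defines exactly'' the classes recognisable in nondeterministic time $(2^n)^k$, i.e.\ NEXPTIME, and completeness is asserted without further argument. You separate membership from hardness. For membership you give an elementary argument that the paper never states: guess the $2^{\mathrm{ar}(S_i)}$-bit truth tables and evaluate the residual QBF by brute force in $2^n \cdot \mathrm{poly}$ deterministic time; this avoids leaning on the ``definable implies recognisable'' direction of Fagin's Theorem altogether. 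For hardness you carry out the work that the paper's appeal leaves implicit: converting ``every class of structures recognisable in time polynomial in $2^n$ is ESO-definable'' into a genuine polynomial-time reduction from an arbitrary NEXPTIME language to truth-checking, which requires bit-blasting domain elements into $n$ Boolean variables, relations of arity $r$ into predicates of arity $nr$, first-order quantifiers into blocks of $n$ Boolean quantifiers, defining order and successor quantifier-free over the bits, and hard-wiring the input $w$ into $\sigma$ with polynomially many clauses. These are precisely the gaps a careful reader must fill in the paper's one-paragraph proof; the paper's ``alternative proof'' (the tableau relation $f(k,q,h,j,t)$) gestures at the same missing content but explicitly leaves the details to the reader. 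One small remark: your concern about the auxiliary numeric predicates can be softened further, since the existential second-order prefix may simply guess and axiomatise any auxiliary relations (e.g.\ tuple successor) it needs, so nothing beyond the order on bit strings must be written out explicitly. What your version buys is a self-contained completeness proof; what the paper's buys is brevity.
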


\begin{proof}
We will apply Theorem~\ref{thm:fagin}.  To do so we must establish the size of
the universe implied by Theorem~\ref{thm:fagin}.  Since Definition~\ref{def:2sat} uses
$n$ Boolean variables, the universe is the set of interpretations
of $n$ Boolean variables.  This set has size $2^n$, and so by Theorem~\ref{thm:fagin},
Definition~\ref{def:2sat} defines exactly the class sets recognisable in $(2^n)^k$
time by a nondeterministic Turing machine.  This is the class NEXPTIME, and
so checking validity of an arbitrary instance of Definition~\ref{def:2sat}
is NEXPTIME-complete.
\end{proof}

For an alternative proof, consider a Turing machine $M$.  For a particular run
of $M$ we can construct a relation $f(k, q, h, j, t)$ defined such that
after $k$ steps $M$ is in state $q$, with its head at position $h$ and
tape cell $j$ containing the symbol $t$.  If $M$ halts within $2^n$ steps
on an input of length $n$, the values of all the variables in this relation
are bounded by $2^n$, which means they can be written down using $n$ bits.
The details of creating a first-order formula constraining $f$ to reflect
the behaviour of $M$ are left to the reader.

\section{Decision Procedure for Second-Order SAT}
\label{sec:complexity}

A solution to an instance of Definition~\ref{def:2sat} is an assignment mapping
each of the second-order variables to some function of the appropriate type and arity.
When deciding whether a particular Second-Order SAT instance is satisfiable, we should think
about how solutions are encoded and in particular how a function is to be encoded.
The functions all have a finite domain and co-domain, so their canonical representation
would be a finite set of ordered pairs.  Such a set is exponentially large in the size of
the domain, so we would prefer to work with a more compact representation if possible.

We will generate \emph{finite state programs} that compute the functions and represent these
programs as finite lists of instructions in SSA form.  This representation has the following
properties, proofs for which can be found in Appendix~\ref{app:proofs}.

\begin{theorem}
\label{thm:l-universal}
Every total, finite function is computed by at least one program in this language.
\end{theorem}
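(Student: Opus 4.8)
The plan is to prove universality by giving an explicit construction rather than an abstract argument: I will show that for an arbitrary total function the instruction set is already rich enough to write down a program realizing its lookup table. The key simplification is that both the domain and the co-domain are finite, so without loss of generality I may regard the function as $f : \mathbb{B}^a \to \mathbb{B}^b$ for fixed widths $a$ and $b$, obtained by fixing any bijective encoding of the domain and co-domain into bit-vectors. Note that the theorem only asserts the \emph{existence} of a computing program and places no bound on its length; this matters, because the programs I construct will in general be exponential in $a$. The compact-representation concern raised in the preceding paragraph is therefore irrelevant here: exponential blow-up is harmless for the purpose of universality.

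First I would reduce to the single-output case. Writing $f(x) = (f_1(x), \ldots, f_b(x))$, it suffices to compute each Boolean-valued coordinate $f_i : \mathbb{B}^a \to \mathbb{B}$ separately and then assemble the results into the output word, the assembly being a fixed finite piece of code. Next I would realize each $f_i$ as a straight-line sequence of instructions. The cleanest route is a lookup table: enumerate the $2^a$ elements $v \in \mathbb{B}^a$ of the domain and emit, for each, a guarded selection that returns the constant $f_i(v)$ when the input equals $v$, chaining these selections so that exactly one fires. Equivalently, I can put $f_i$ into disjunctive normal form as a disjunction of at most $2^a$ minterms over the input bits and translate each conjunction, disjunction, and negation into its corresponding instruction. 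Because the construction is loop-free and every instruction refers only to the inputs or to values defined earlier in the list, the concatenation of all these sequences is a well-formed program in SSA form, and by construction it evaluates to $f(x)$ on every input $x$. Since $f$ was an arbitrary total finite function, this establishes the claim.

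The only genuinely nontrivial step, and hence the main obstacle, is verifying that the primitive instruction set of the language actually contains the operations the construction relies on, namely a functionally complete Boolean basis (for the DNF route) or equality testing together with conditional selection and Boolean constants (for the lookup-table route). Once such a basis is exhibited among the available instructions, every minterm, guard, and combining step is a routine, if voluminous, translation, and well-formedness of the resulting SSA list is immediate from its loop-free shape. I would therefore spend the bulk of the argument pinning down a small sufficient fragment of the instruction set and confirming it is functionally complete; the remainder is bookkeeping.
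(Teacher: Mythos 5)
Your guarded-selection/lookup-table route is essentially the paper's own proof (Lemma~\ref{lem:ite-size} in Appendix~\ref{app:proofs}): the paper computes $f$ by a chain of \texttt{EQ} and \texttt{ITE} instructions, one pair per domain element, selecting the constant $f(v)$ when the input equals $v$. The functional-completeness check you flag as the main obstacle is immediate, since the language $\mathcal{L}$ (Fig.~\ref{fig:l-language}) explicitly provides \texttt{eq}, \texttt{ite}, and constant operands.
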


\begin{theorem}
\label{thm:l-concise}
Furthermore, this representation is optimally
concise -- there is no encoding that gives a shorter representation to every function.
\end{theorem}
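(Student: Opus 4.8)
The plan is to prove Theorem~\ref{thm:l-concise} by an information-theoretic argument based on the Kraft--McMillan inequality, with the real content hidden in a single \emph{density} estimate about the instruction-list language. Throughout I take an ``encoding'' to mean what makes it a genuine representation, namely a uniquely decodable (e.g.\ prefix) binary code, since one must be able to recover a function from its description. Write $L(f)$ for the bit-length that our representation assigns to a total finite function $f$, i.e.\ the length of a shortest instruction list computing $f$; by Theorem~\ref{thm:l-universal} this is well defined for every $f$. The claim to establish is then that no uniquely decodable code $E'$ can satisfy $|E'(f)| < L(f)$ for \emph{all} $f$ at once. (Note the restriction to decodable codes is essential and not a technicality: against arbitrary injections one can always shave a bit off a single codeword using the empty string, so the theorem is genuinely a statement about decodable codes.)

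The engine of the proof is short. Our shortest programs for distinct functions are distinct, self-delimiting instruction lists, so they form a uniquely decodable code and McMillan's inequality gives $\sum_f 2^{-L(f)} \le 1$. Suppose for contradiction that a uniquely decodable $E'$ achieves $|E'(f)| \le L(f) - 1$ for every $f$. Then $2^{-|E'(f)|} \ge 2 \cdot 2^{-L(f)}$, so summing over all $f$ yields $\sum_f 2^{-|E'(f)|} \ge 2 \sum_f 2^{-L(f)}$. If I can show the density estimate $\sum_f 2^{-L(f)} > \tfrac12$, the right-hand side exceeds $1$, contradicting McMillan's inequality applied to the (uniquely decodable) code $E'$. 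Hence no uniformly shorter encoding exists.

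The density estimate $\sum_f 2^{-L(f)} > \tfrac12$ is the step I expect to be the main obstacle, and it is precisely where the structure of the SSA instruction-list language (the same structure exploited for Theorem~\ref{thm:l-universal}) must enter. The point to make rigorous is that our representation wastes essentially no code space: valid programs form a prefix code in which almost every short bit-string is (a prefix of) a legal list, so the code is close to complete and its Kraft sum approaches $1$, comfortably above $\tfrac12$. Concretely I would exhibit enough syntactically distinct short programs computing pairwise-distinct functions -- for instance lists that fix the output as an explicit constant or as a direct projection of the input bits already realise exponentially many distinct functions per length -- and argue that these alone contribute more than $\tfrac12$ to $\sum_f 2^{-L(f)}$. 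The delicate part is bounding redundancy (distinct lists computing the same function), since $L(f)$ counts only the \emph{shortest} list; this is what ties the estimate to the precise instruction set and SSA normal form.

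Finally, I would record the equivalent combinatorial reading as a sanity check: for a complete code, $\sum_f 2^{-L(f)} = 1$ is the same statement as saying that for each length bound $\ell$ the number of functions with $L(f) \le \ell$ meets the counting bound set by the available strings, so our length profile is coordinatewise minimal and cannot be dominated. This ``no uniformly shorter code'' conclusion is exactly the assertion that the representation is optimally concise, completing Theorem~\ref{thm:l-concise}. The genuine work, in either formulation, is the density/completeness estimate for the instruction-list encoding; the Kraft--McMillan step is then routine.
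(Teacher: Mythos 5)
Your Kraft--McMillan route does not close, because the entire content of the theorem, in your formulation, sits inside the density estimate $\sum_f 2^{-L(f)} > \tfrac12$, which you leave as a conjecture. Nothing in the paper's setup supports it. The SSA instruction-list encoding is heavily redundant: commutative instructions admit two operand orders, independent instructions can be listed in either order, opcode fields have unused values. Hence the number of \emph{distinct functions} whose shortest program fits in $\ell$ bits can be a vanishing fraction of $2^\ell$, and the Kraft sum can sit far below $\tfrac12$. Your sketch for the estimate (``almost every short bit-string is a prefix of a legal list'') argues about the density of legal \emph{programs}, which is the wrong quantity: $L(f)$ charges each function only once, so exactly the redundancy you defer as ``the delicate part'' is what decides the question. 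A secondary error: shortest instruction lists are not ``self-delimiting'' -- every prefix of a valid SSA list is itself a valid list, so these strings form neither a prefix code nor a uniquely decodable one (a length header would fix this but breaks the exact bookkeeping $|E'(f)| \le L(f) - 1$). As written, the proposal reduces the theorem to an unproved, and quite possibly false, combinatorial claim about $\mathcal{L}$.

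The paper proves something weaker and much cheaper, and its definition of optimality is the one to calibrate against: an encoding $E$ is judged by its \emph{worst-case} length $|E(F)| = \max \{ |E(f)| \mid f \in F \}$, and ``optimally concise'' means no competing encoding beats this asymptotically. Under that reading no density estimate is needed. Lemma~\ref{lem:ite-size} gives the upper bound: every total $f : \mathcal{S} \to \mathcal{S}$ has an EQ/ITE table-lookup program of $O(|\mathcal{S}| \log |\mathcal{S}|)$ bits. Lemma~\ref{lem:large-encoding} gives the matching lower bound by bare counting: there are $|\mathcal{S}|^{|\mathcal{S}|}$ total functions, so \emph{any} injective naming scheme -- not just uniquely decodable codes; your restriction is unnecessary, and the ``shave a bit off a single codeword'' justification for it does not bear on a claim quantified over \emph{every} function -- must spend at least $\log_2 \bigl( |\mathcal{S}|^{|\mathcal{S}|} \bigr) = |\mathcal{S}| \log_2 |\mathcal{S}|$ bits on some function. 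Together these yield asymptotic worst-case optimality, which is the sense in which the paper claims the representation cannot be uniformly beaten. If you insist on the stronger pointwise statement you aimed at (no $E'$ with $|E'(f)| < L(f)$ for all $f$), the honest obligation is a level-set bound -- for some $k$, more than $2^k - 1$ functions satisfy $L(f) \le k$ -- and that is again a density fact about $\mathcal{L}$ that neither your proposal nor the paper establishes.
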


\paragraph{Finite State Program Synthesis}

To formally define the finite state synthesis problem, we need to fix some notation.
We will say that a program $P$ is a finite list of instructions in SSA form, where no
instruction can cause a back jump, i.e. our programs are loop free and non-recursive.
Inputs $x$ to the program are drawn from some finite domain $\mathcal{D}$.
The synthesis problem is given to us in the form of a specification $\sigma$ which is
a function taking a program $P$ and input $x$ as parameters and returning a boolean
telling us whether $P$ did ``the right thing'' on input $x$.
Basically, the finite state synthesis
problem checks the truth of Definition~\ref{def:finite-synth-formula}.

\begin{definition}[Finite Synthesis Formula]
 \label{def:finite-synth-formula}
\[
 \exists P . \forall x \in \mathcal{D} . \sigma(P, x)
\]
\end{definition}

To express the specification $\sigma$, we introduce a function
$\mathtt{exec}(P, x)$ that returns the result of running program $P$ with input $x$.
Since $P$ cannot contain loops or recursion, $\mathtt{exec}$ is a total function.

\begin{example}
The following finite state synthesis problem is satisfiable:
\[
 \exists P . \forall x \in \mathbb{N}_8 . \mathtt{exec}(P, x) \geq x
\]

One such program $P$ satisfying the specification is \verb|return 8|, which just returns 8 for any input.
\end{example}

We now present our main theorem, which says that Second-Order SAT can be reduced to finite state
program synthesis.  The proof of this theorem can be found in
Appendix~\ref{app:proofs}.

\begin{theorem}[Second-Order SAT is Polynomial Time Reducible to Finite Synthesis]
 Every instance of Definition~\ref{def:2sat} is polynomial time reducible to an instance
 of Definition~\ref{def:finite-synth-formula}.
\end{theorem}

\begin{corollary}
\label{cor:finite-synth-nexp}
 Finite-state program synthesis is NEXPTIME-complete.
\end{corollary}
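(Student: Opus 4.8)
The plan is to prove NEXPTIME-completeness in the two standard directions, hardness and membership, and to observe that the hardness direction is essentially free while membership is where the real work lies. For hardness I would simply chain together the two results immediately preceding the corollary: Second-Order SAT is NEXPTIME-complete (hence NEXPTIME-hard), and the preceding reduction theorem gives a polynomial-time many-one reduction from Second-Order SAT to the finite synthesis problem of Definition~\ref{def:finite-synth-formula}. Since NEXPTIME is closed under polynomial-time many-one reductions, the finite synthesis problem inherits NEXPTIME-hardness, and no further argument is required.

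For membership I need to exhibit a nondeterministic procedure running in exponential time, and the crucial first step is to bound the size of a candidate witness. The domain $\mathcal{D}$ is described succinctly (e.g. by a bit-width $n$, so that $|\mathcal{D}| = 2^n$ is exponential in the input size), yet whenever some satisfying program exists, one of bounded size does too: by Theorem~\ref{thm:l-universal} every total function on $\mathcal{D}$ is computed by some program in the language, and in particular the explicit case-split (``lookup table'') program that hard-codes the output for each of the $|\mathcal{D}|$ inputs will do. This program has size $O(|\mathcal{D}|)$, i.e. at most exponential in the input. Hence it suffices to search over programs whose encoding has at most exponential length.

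The procedure then guesses such a program $P$ using exponentially many nondeterministic bits and deterministically verifies $\forall x \in \mathcal{D} . \sigma(P, x)$ by brute force: it iterates over all $|\mathcal{D}|$ inputs $x$, computes $\mathtt{exec}(P, x)$ — polynomial in $|P|$, hence at most exponential, since $P$ is loop-free — and evaluates $\sigma(P, x)$, which is polynomial in the sizes of its fixed arguments. This is exponentially many iterations of exponential-time work, so the verification runs in exponential time overall, placing the problem in NEXPTIME. Combining the two directions yields NEXPTIME-completeness.

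The main obstacle is precisely the witness-size argument in the membership direction: one must be certain that a satisfying program of at-most-exponential size exists whenever any satisfying program exists, so that the guess-and-check machine is well defined and stays within the time bound. This is exactly what the universality of the representation (Theorem~\ref{thm:l-universal}) buys us through the lookup-table program; the conciseness result of Theorem~\ref{thm:l-concise} is not needed for the bound, though it reassures us the representation is not wasteful. A secondary point worth checking is that evaluating $\sigma(P, x)$ for fixed $P$ and $x$ does not reintroduce quantifier alternations that would exceed exponential time; since $\sigma$ is presented as a boolean-valued function of its two arguments and $\mathtt{exec}$ is total, this evaluation is benign.
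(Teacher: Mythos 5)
Your proposal is correct and takes essentially the same route the paper intends: NEXPTIME-hardness falls out of chaining the NEXPTIME-completeness of Second-Order SAT with the preceding polynomial-time reduction theorem, and your membership argument (guess a program of at most exponential size --- justified by the lookup-table construction behind Theorem~\ref{thm:l-universal} --- then check all inputs by brute force) is precisely the witness-size observation the paper itself makes in Section~\ref{sec:stopping-condition} as a consequence of Theorems~\ref{thm:l-universal} and~\ref{thm:l-concise}. One minor wording slip: ``NEXPTIME is closed under polynomial-time many-one reductions'' is the justification for transferring \emph{membership} downward, not hardness; hardness transfers simply because reductions compose, but your argument is substantively right.
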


We are now in a position to sketch the design of a decision procedure for Second-Order SAT: we will convert
the Second-Order SAT problem to an equisatisfiable finite synthesis problem which we will then solve
with a finite state program synthesiser.  This design will be elaborated in Section~\ref{sec:kalashnikov-implementation}.

\section{Solving Second-Order SAT via Finite-State Program Synthesis}
\label{sec:kalashnikov-implementation}


In this section we will present a sound and complete algorithm for
finite-state synthesis that we use to decide Second-Order SAT.
We begin by describing a general purpose synthesis procedure (Section~\ref{sec:kalashnikov-abstract-synth}),
then detail how this general purpose procedure is instantiated for
synthesising finite-state programs.  
We then describe the algorithm
we use to search the space of possible programs (Sections~\ref{sec:kalashnikov-candidates}, \ref{sec:kalashnikov-param-space} and~\ref{sec:kalashnikov-search-space}).

\subsection{General Purpose Synthesis Algorithm}
\label{sec:kalashnikov-abstract-synth}

We use Counterexample Guided Inductive Synthesis
(CEGIS)~\cite{lezama-thesis,sketch,toast} to find a
program satisfying our specification.  The core of the CEGIS
algorithm is the refinement loop 
detailed in Algorithm~\ref{alg:cegis}.

\begin{algorithm}
 \caption{Abstract refinement algorithm
 \label{alg:cegis}}

 \begin{multicols}{2}
 \begin{algorithmic}[1]
\Statex
\Function{synth}{inputs}
  \Let{$(i_1, \ldots, i_N)$}{inputs}
  \Let{query}{$\exists P . \sigma(i_1, P) \land \ldots \land \sigma(i_N, P)$}
  \Let{result}{decide(query)}
  \If{result.satisfiable}
    \State \Return{result.model}
  \Else
    \State \Return{UNSAT}
  \EndIf
\EndFunction
\Statex
\Function{verif}{P}
  \Let{query}{$\exists x . \lnot \sigma(x, P)$}
  \Let{result}{decide(query)}
  \If{result.satisfiable}
    \State \Return{result.model}
  \Else
    \State \Return{valid}
  \EndIf
\EndFunction
\columnbreak
\Statex
\Function{refinement loop}{}
  \Let{inputs}{$\emptyset$}
  \Loop
    \Let{candidate}{\Call{synth}{inputs}}
    \If{candidate = UNSAT}
      \State \Return{UNSAT}
    \EndIf
    \Let{res}{\Call{verif}{candidate}}
    \If{res = valid}
      \State \Return{candidate}
    \Else
      \Let{inputs}{inputs $\cup$ res}
    \EndIf
  \EndLoop
\EndFunction
 \end{algorithmic}
 \end{multicols}
\end{algorithm}



 

The algorithm is divided into two procedures: {\sc synth} 
and {\sc verif}, which interact via a finite set of test vectors {\sc inputs}.
The {\sc synth} procedure tries to find an existential witness $P$
that satisfies the partial specification:
\[
 \exists P . \forall x \in \text{\sc inputs} . \sigma(x, P)
\]

If {\sc synth} succeeds in finding a witness $P$, this witness is a
candidate solution to the full synthesis formula.  We pass this candidate
solution to {\sc verif} which determines whether it does satisfy
the specification on all inputs by checking satisfiability of the
verification formula:
\[
 \exists x . \lnot \sigma(x, P)
\]

If this formula is unsatisfiable, the candidate solution is in fact a
solution to the synthesis formula and so the algorithm terminates. 
Otherwise, the witness $x$ is an input on which the candidate solution fails
to meet the specification.  This witness $x$ is added to the {\sc inputs}
set and the loop iterates again.  It is worth noting that each iteration of the
loop adds a new input to the set of inputs being used for synthesis.  If
the full set of inputs $X$ is finite, this means that the refinement loop
can only iterate a finite number of times.

\subsection{Finite-State Synthesis}
\label{sec:kalashnikov-concrete-algorithm}

We will now show how the generic construction of Section~\ref{sec:kalashnikov-abstract-synth}
can be instantiated to produce a useful finite-state program synthesiser.
A natural choice for such a synthesiser would be to
work in the logic of quantifier-free propositional formulae and to use a
propositional SAT or SMT-$\mathcal{BV}$ solver as the decision procedure. 
However we propose a slightly different tack, which is to use a decidable
fragment of C as a ``high level'' logic.  We call this fragment \newC.


The characteristic property of a
\newC  program is that safety can be decided for
it using a single query to a Bounded Model Checker.  A \newC program is
just a C program with the following syntactic restrictions:
\begin{itemize}
 \item all loops in the program must have a constant bound;
 \item all recursion in the program must be limited to a constant depth;
 \item all arrays must be statically allocated (i.e. not using \texttt{malloc}),
 and be of constant size.
\end{itemize}
\newC programs may use nondeterministic values, assumptions
and arbitrary-width types.

Since each loop is bounded by a constant, and each recursive function call is
limited to a constant depth, a \newC program necessarily terminates and in
fact does so in $O(1)$ time.  If we call the largest loop bound~$k$, then
a Bounded Model Checker with an unrolling bound of $k$ will be a complete
decision procedure for the safety of the program.  For a \newC program of
size $l$ and with largest loop bound~$k$, a Bounded Model Checker will
create a SAT problem of size $O(lk)$.  Conversely, a SAT problem
of size $s$ can be converted trivially into a loop-free \newC program
of size $O(s)$.  The safety problem for \newC is therefore NP-complete,
which means it can be decided fairly efficiently for many practical
instances.

\subsection{Encoding the Problem in \newC}
To instantiate the abstract synthesis algorithm in \newC, we must express $X, Y,
\sigma$ in \newC, then express the validity of the
synthesis formula as a safety property of the resulting \newC program.

Our encoding for these pieces is the following:
\begin{itemize}
 \item $X$ is the set of $N$-tuples of 32-bit bitvectors.  
 \item $Y$ is the set of $M$-tuples of 32-bit bitvectors.
 \item $\sigma$ is a pure function with type $X \times Y \rightarrow \mathrm{Bool}$.  
\end{itemize}

 $P$ is written in a simple RISC-like language $\mathcal{L}$, whose syntax is given in Fig.~\ref{fig:l-language}.  Programs in $\mathcal{L}$
 have type $X \rightarrow Y$. 
 We supply an interpreter for $\mathcal{L}$ which is written in \newC.  The type of
 this interpreter is $(X \rightarrow Y) \times X \rightarrow Y$.
The specification function $\sigma$ will include calls to this interpreter, by which means it
will examine the behaviour of a candidate $\mathcal{L}$ program.

\begin{figure}
\begin{center}
{\small

\setlength{\tabcolsep}{14pt}
Integer arithmetic instructions:

\begin{tabular}{llll}
 \verb|add a b| & \verb|sub a b| & \verb|mul a b| & \verb|div a b| \\
 \verb|neg a| &   \verb|mod a b| & \verb|min a b| & \verb|max a b|
\end{tabular}

\medskip

Bitwise logical and shift instructions:

\begin{tabular}{lll}
 \verb|and  a b| & \verb|or   a b| & \verb|xor a b| \\
 \verb|lshr a b| & \verb|ashr a b| & \verb|not a|
\end{tabular}

\medskip

Unsigned and signed comparison instructions:

\begin{tabular}{lll}
 \verb|le  a b| & \verb|lt  a b| & \verb|sle  a b| \\
 \verb|slt a b| & \verb|eq  a b| & \verb|neq  a b| \\
\end{tabular}

\medskip

Miscellaneous logical instructions:

\begin{tabular}{lll}
 \verb|implies a b| & \verb|ite a b c| &  \\
\end{tabular}

\medskip

\setlength{\tabcolsep}{12pt}

Floating-point arithmetic:

\begin{tabular}{llll}
 \verb|fadd a b| & \verb|fsub a b| & \verb|fmul a b| & \verb|fdiv a b|
\end{tabular}

}
\end{center}

 \caption{The language $\mathcal{L}$}
 \label{fig:l-language}
\end{figure}




  



With these pieces in place, we construct a \newC program {\sc synth.c}
which takes as parameters a candidate program $P$ and test
inputs $X$.  The program contains an assertion which fails
iff $P$ meets the specification for each of the inputs in $X$.
Finding a new candidate program is then equivalent to checking
the safety of {\sc synth.c} for which we use the strategies described in the next section.







\subsection{Candidate Generation Strategies}
\label{sec:kalashnikov-candidates}
The remit of the {\sc synth} portion of the CEGIS loop 
is to generate candidate
programs.  There are many possible strategies for finding these candidates; we employ the
following strategies in parallel:
\paragraph{Explicit Proof Search.} The simplest strategy for finding candidates
is to just exhaustively enumerate them all, starting with the shortest and
progressively increasing the number of instructions.  
Since the set of
$\mathcal{L}$-programs is recursively enumerable, this procedure is complete.
\paragraph{Symbolic Bounded Model Checking.} Another complete method for generating
candidates is to simply use BMC on the {\sc synth.c} program.  As with explicit
search, we must progressively increase the length of the $\mathcal{L}$-program we search for
in order to get a complete search procedure.
\paragraph{Genetic Programming and Incremental Evolution.} \label{sec:gp}
Our final strategy is genetic programming
(GP)~\cite{langdon:fogp,brameier2007linear}.  GP provides an adaptive way of
searching through the space of $\mathcal{L}$-programs for an individual
that is ``fit'' in some sense.  We measure the
fitness of an individual by counting the number of tests in {\sc inputs}
for which it satisfies the specification.

To bootstrap GP in the first iteration of the CEGIS loop, we generate a population
of random $\mathcal{L}$-programs. We then iteratively evolve this population by
applying the genetic operators {\sc crossover} and {\sc mutate}.
{\sc Crossover} combines selected existing programs into new programs,
whereas {\sc mutate} randomly changes parts of a single program.
Fitter programs are more likely to be selected.



Rather than generating a random population at the beginning of each subsequent
iteration of the CEGIS loop, we start with the population we had at the end of the
previous iteration.  The intuition here is that this population contained
many individuals that performed well on the $k$ inputs we had before, so
they will probably continue to perform well on the $k+1$ inputs we have now.
In the parlance of evolutionary programming, this is known as
incremental evolution~\cite{Gomez97incrementalevolution}.

\subsection{Parameterising the Program Space}
\label{sec:kalashnikov-param-space}
In order to search the space of candidate programs, we parametrise
the language~$\mathcal{L}$, inducing a lattice of progressively
more expressive languages.  We start by attempting to synthesise
a program at the lowest point on this lattice and increase the
parameters of~$\mathcal{L}$ until we reach a point at which
the synthesis succeeds.

As well as giving us an automatic search procedure, this parametrisation
greatly increases the efficiency of our system since languages
low down the lattice are very easy to decide safety for.  If a program
can be synthesised in a low-complexity language, the whole procedure
finishes much faster than if synthesis had been attempted in a
high-complexity language.

\paragraph{Program Length: $l$}
The first parameter we introduce is program length, denoted by $l$.
At each iteration we synthesise programs of length exactly $l$.
We start with $l = 1$ and increment $l$ whenever we determine
that no program of length $l$ can satisfy the specification.  When we do
successfully synthesise a program, we are \emph{guaranteed that it
is of minimal length} since we have previously established that no
shorter program is correct.

\paragraph{Word Width: $w$}
An $\mathcal{L}$-program runs on a virtual machine (the $\mathcal{L}$-machine) that
has its own set of parameters.  The only relevant parameter is
the \emph{word width} of the $\mathcal{L}$-machine, that is, the number of bits
in each internal register and immediate constant.  This parameter is denoted by
$w$.  The size of the final SAT problem generated by {\sc cbmc} scales
polynomially with $w$, since each intermediate C variable corresponds
to $w$ propositional variables.

\paragraph{Number of Constants: $c$}
Instructions in $\mathcal{L}$ take either one or two operands.
Since any instruction whose operands are all constants can always be
eliminated (since its result is a constant), we know that a loop-free program
of minimal length will not contain any instructions with two constant
operands.  Therefore the number of constants that can appear in
a minimal program of length $l$ is at most $l$.  By minimising the number
of constants appearing in a program, we are able to use a particularly
efficient program encoding that speeds up the synthesis procedure
substantially.  The number of constants used in a program is the parameter $c$.

\subsection{Searching the Program Space}
\label{sec:kalashnikov-search-space}

The key to our automation approach is to come up with a sensible way in which to
adjust the $\mathcal{L}$-parameters in order to cover all possible programs.
After each round of {\sc synth}, we may need to adjust the parameters.  The
logic for these adjustments is shown as a tree in Fig.~\ref{fig:paramsflow}.

Whenever {\sc synth} fails, we consider which parameter might have caused the
failure.  There are two possibilities: either the program length $l$ was too small,
or the number of allowed constants $c$ was.  If $c < l$, we just increment $c$ and
try another round of synthesis, but allowing ourselves an extra program constant.
If $c = l$, there is no point in increasing $c$ any further.  This is because
no minimal $\mathcal{L}$-program has $c > l$, for if it did there would
have to be at least one instruction with two constant operands.  This
instruction could be removed (at the expense of adding its result as
a constant), contradicting the assumed minimality of the program.  So
if $c = l$, we set $c$ to 0 and increment $l$, before attempting
synthesis again.

If {\sc synth} succeeds but {\sc verif} fails, we have a candidate
program that is correct for some inputs but incorrect on at least
one input.  However, it may be the case that the candidate program
is correct for \emph{all} inputs when run on an $\mathcal{L}$-machine
with a small word size.  For example, we may have synthesised a
program which is correct for all 8-bit inputs, but incorrect for
some 32-bit input.  If this is the case (which we can determine
by running the candidate program through {\sc verif} using the smaller
word size), we may be able to produce a correct program for
the full $\mathcal{L}$-machine by using some constant extension rules.
If constant generalization
is able to find a correct program, we are done.  Otherwise,
we need to increase the word width of the $\mathcal{L}$-machine
we are currently synthesising for.

\begin{figure}[t]
\centering
\begin{tikzpicture}[scale=0.7, transform shape, node distance=3cm, auto]
\tikzstyle{decision} = [diamond, draw, fill=blue!20, 
    text width=4.5em, text badly centered, node distance=3cm, inner sep=0pt]
\tikzstyle{block} = [rectangle, draw, fill=blue!20, 
    text width=5em, text centered, rounded corners, minimum height=4em]
\tikzstyle{line} = [draw, -latex']
\tikzstyle{cloud} = [draw, ellipse,fill=red!20, node distance=3cm,
    minimum height=2em]

 \node [decision] (synthsucc) {{\sc Synth} succeeds?};

 \node [decision, below of=synthsucc, node distance=3cm] (verif) {{\sc Verif} succeeds?};
 \node [decision, right of=verif] (ck) {$c < l$?};

 \node [block, below of=verif, node distance=4cm]   (done) {Done!};
 \node [decision, left of=verif, node distance=4cm] (verifw) {{\sc Verif} succeeds for small words?};

 \node [block, below of=ck, node distance=4cm] (incc) {$c := c+1$};
 \node [block, right of=incc, node distance=3cm] (incl) {$c := 0$\\ $l := l+1$};

 \node [decision, below of=verifw, node distance=4cm] (gen) {Extend succeeds?};
 \node [block, left of=verifw, node distance=4cm] (iterate) {Parameters unchanged};

 \node [block, left of=gen, node distance=4cm] (incw) {$w := w+1$};

 \path [line] (synthsucc) -- node [left] {Yes} (verif);
 \path [line] (synthsucc) -| node [above, near start] {No} (ck);

 \path [line] (verif) -- node [left] {Yes} (done);
 \path [line] (verif) -- node [above, near start] {No} (verifw);

 \path [line] (ck) -- node [left] {Yes} (incc);
 \path [line] (ck) -| node  [above, near start]  {No} (incl);

 \path [line] (verifw) -- node [right] {Yes} (gen);
 \path [line] (verifw) -- node [above] {No} (iterate);
 
 \path [line] (gen) -- node [below] {Yes} (done);
 \path [line] (gen) -- node [below] {No} (incw);

\end{tikzpicture}

 \caption{Decision tree for increasing parameters of $\mathcal{L}$.}
 \label{fig:paramsflow}

\end{figure}

\subsection{Stopping Condition for Unsatisfiable Specifications}
\label{sec:stopping-condition}
If a specification is unsatisfiable, we would still like our algorithm to terminate
with an ``unsatisfiable'' verdict.  To do this, we can observe that any total function
taking $n$ bits of input is computed by some program of at most $2^n$ instructions
(a consequence of Theorems~\ref{thm:l-universal} and~\ref{thm:l-concise}).
Therefore every satisfiable specification has a solution with at most $2^n$ instructions.
This means that if we ever need to increase the length of the candidate program we
search for beyond $2^n$, we can terminate, safe in the knowledge that the
specification is unsatisfiable.




\section{Soundness and Completeness}
\label{sec:kalashnikov-proofs}
We will now state soundness and completeness results for the Second-Order SAT solver.
Proofs for each of these theorems can be found in Appendix~\ref{app:proofs}.

\begin{theorem}\label{thm:synth-sound}
Algorithm~\ref{alg:cegis} is sound -- if it terminates with witness $P$, then
$P \models \sigma$.
\end{theorem}

\begin{theorem}
 \label{thm:synth-semi-complete}
 Algorithm~\ref{alg:cegis} is semi-complete -- if a solution $P \models \sigma$
 exists then Algorithm~\ref{alg:cegis} will find it.
\end{theorem}

\begin{theorem}
 \label{thm:finite-synth-complete}
 Algorithm~\ref{alg:cegis} with the stopping condition described in Section~\ref{sec:stopping-condition}
 is complete when instantiated with \newC as a background theory -- it will terminate for all specifications $\sigma$.
\end{theorem}

 Since safety of \newC programs is decidable, Algorithm~\ref{alg:cegis} is sound and complete
 when instantiated with \newC as a background theory and using the stopping condition of
 Section~\ref{sec:stopping-condition}.  This construction therefore gives as a decision procedure
 for Second-Order SAT.

\paragraph{Runtime as a Function of Solution Size.}
We note that the runtime of our solver is heavily influenced by the length of the shortest
program satisfying the specification, since we begin searching for short programs.
%
%
We will now show that the number of iterations of the CEGIS loop
is a function of the Kolmogorov complexity of the synthesised program.
Let us first recall the definition of the Kolmogorov complexity
of a function $f$:

\begin{definition}[Kolmogorov complexity]
 The Kolmogorov complexity $K(f)$ is the length of the shortest program that 
 computes~$f$.
\end{definition}

We can extend this definition slightly to talk about the Kolmogorov complexity of a
synthesis problem in terms of its specification:

\begin{definition}[Kolmogorov complexity of a synthesis problem]
 The Kolmogorov complexity of a program specification $K(\sigma)$ is the length of the shortest
 program $P$ such that P is a witness to the satisfiability of $\sigma$.
\end{definition}

Let us consider the number of iterations of the CEGIS loop $n$ required for a specification
$\sigma$.  Since we enumerate candidate programs in order of length, we are always synthesising
programs with length no greater than $K(\sigma)$ (since when we enumerate the first correct program,
we will terminate).  So the space of solutions we search over is the space
of functions computed by $\mathcal{L}$-programs of length no greater than $K(\sigma)$.  Let's
denote this set $\mathcal{L}(K(\sigma))$.
Since there are $O(2^{K(\sigma)})$ \emph{programs} of length $K(\sigma)$ and some functions
will be computed by more than one program, we have $| \mathcal{L}(K(\sigma)) | \leq O(2^{K(\sigma)})$.

Each iteration of the CEGIS loop distinguishes at least one incorrect function from the set of correct
functions, so the loop will iterate no more than $| \mathcal{L}(K(\sigma)) |$ times.
Therefore another bound on our runtime is
$NTIME\left(2^{K(\sigma)} \right)$.

\section{Applications of Second-Order SAT}

Program analysis problems can be reduced to the problem of finding
solutions to a second-order constraint \cite{DBLP:conf/pldi/GulwaniSV08,DBLP:conf/pldi/GrebenshchikovLPR12,synth-termination}.
In this section we will show that, although decidable, Second-Order SAT is still expressive enough to capture 
many interesting such problems. Moreover, since Second-Order SAT is NEXPTIME-complete, any NEXPTIME problem can be solved with a Second-Order SAT solver.
When we describe analyses related to loops, we will characterise the loop
as having initial state $I$, guard $G$, transition relation $B$.

\paragraph{Safety Invariants.}
Given a safety assertion $A$, a safety invariant is a set of states $S$ which is inductive with respect to the program's transition
relation, and which excludes an error state.  A predicate $S$ is a safety invariant iff it satisfies the following criteria:%
\begin{align}
  \exists S . \forall x, x' .  & I(x) \rightarrow S(x) ~ \wedge \label{safety_base}\\
  & S(x) \wedge G(x) \wedge B(x, x') \rightarrow S(x') ~ \wedge \label{safety_inductive}\\
  & S(x) \wedge \neg G(x) \rightarrow A(x) \label{safety_safe}
\end{align}


\ref{safety_base} says that each state reachable on entry to the loop is in the set $S$, and in
combination with \ref{safety_inductive} shows that every state that can be reached by the loop
is in $S$.  The final criterion~\ref{safety_safe} says that if the loop exits while in an $S$-state,
the assertion $A$ is not violated.  



\paragraph{Termination and non-termination.}
As shown in~\cite{synth-termination}, termination of a loop can be encoded as the following
Second-Order SAT formula, where $W$ is an inductive invariant of the loop that is established by the initial states $I$ if the
loop guard $G$ is met, and $R$ is a ranking function as restricted by $W$: 
%
 \begin{align*}
  \exists R, W . \forall x, x' . & I(x) \wedge G(x) \rightarrow W(x) ~ \wedge \\
                                 & G(x) \wedge W(x) \wedge B(x, x') \rightarrow W(x') \wedge R(x) > 0 
  \wedge R(x) > R(x')
 \end{align*}
%
Similarly, non-termination can be expressed in Second-Order SAT as follows:
%
\label{def:snt}
 \begin{align*}
  \exists N, C, x_0 . \forall x .  N(x_0) ~\wedge~  &  N(x) \rightarrow G(x) ~ \wedge \\
							& N(x) \rightarrow B(x, C(x)) \wedge N(C(x))
 \end{align*}
%
$N$ denotes a recurrence set, i.e. a nonempty set of
states such that for each $s \in N$ there exists a transition to some $s' \in N$, and
$C$ is a Skolem function that chooses the successor $x'$.
More details on the formulations for termination and non-termination can be found in \cite{synth-termination}.


\paragraph{Other NEXPTIME Problems.}
NEXPTIME is a very large complexity class, which includes decision problems over infinite-state and finite-state systems.
Some other interesting problems in NEXPTIME, and which are therefore reducible to Second-Order SAT, include:

\begin{itemize}
 \item Satisfiability of modal-$\mu$ calculus\cite{kozen1983results}.
 \item Reachability of pushdown systems~\cite{bouajjani1997reachability}.
 \item Finding a winning strategy for Go~\cite{DBLP:conf/ifip/Robson83}.
\end{itemize}

Problems in any of the classes included in NEXPTIME are of course reducible to Second-Order SAT. 
For instance, we have evaluated our solver on the Quantified Boolean Formula (QBF) problem, a well-known PSPACE-complete problem.  

\section{Experiments}
\label{sec:kalashnikov-experiments}

We implemented our decision procedure for Second-Order SAT as 
the \kalashnikov tool.
To evaluate the viability of Second-Order SAT,
we used \kalashnikov to solve formulae generated from a variety of problems.  Our
benchmarks come from superoptimisation, code deobfuscation, floating point verification,
ranking function and recurrent set synthesis, and QBF solving.
The superoptimisation and code deobfuscation benchmarks were taken from the experiments
of~\cite{brahma}; the termination benchmarks were taken from SVCOMP'15~\cite{svcomp15} and
they include the experiments of~\cite{synth-termination}; the QBF instances consist of some simple instances
created by us and some harder instances taken from~\cite{qbflib}.

We would like to stress that these experiments serve to evaluate the potential of using
Second-Order SAT as a backend for many program analysis tasks and are not intended
to compare performance with specialised solvers for each of these tasks.

We ran our experiments on a 4-core, 3.30\,GHz Core i5 with 8\,GB of RAM.  Each benchmark
was run with a timeout of 180s.  The results are shown in Table~\ref{tbl:results}. For each category of benchmarks, we report the total
number of benchmarks in that category, the number we were able to solve within the time limit,
the average specification size (in lines of code), the average solution size (in instructions),
the average number of iterations of the CEGIS loop, the average time and total
time taken. The deobfuscation and floating point benchmarks are considered together with the superoptimisation ones.  

It should be understood that
in contrast to less expressive logics that might be invoked several times in the analysis of some
problem, each of these benchmarks is a ``complete'' problem from
the given problem domain.  For example, each of the benchmarks in the termination category
requires \kalashnikov to prove that a full program terminates, i.e.~it must find a ranking function
and supporting invariants, then prove that these constitute a valid termination proof for the program
being analysed.

\paragraph{Discussion of the experimental results.}
The timings show that for the instances where we can find a satisfying assignment,
we tend to do so quite quickly (on the order of a few seconds).  Furthermore the
programs we synthesise are often short, even when the problem domain is very complex, such as for
termination or QBF.

Not all of these benchmarks are satisfiable, and in particular around half of the termination
benchmarks correspond to attempted proofs that non-terminating programs terminate and vice versa.
This illustrates one of the current shortcomings of Second-Order SAT as a decision procedure:
we can only conclude that a formula is unsatisfiable once we have examined candidate solutions
up to a very high length bound.  Being able to detect unsatisfiability of a Second-Order SAT 
formula earlier than this would be extremely valuable.  We note that for some formulae we can
simultaneously search for a proof of satisfiability and of unsatisfiability.  For example, since QBF is
closed under negation, we can take a QBF formula $\phi$ then encode both $\phi$ and $\neg \phi$ as
second-order SAT formulae which we then solve.

\begin{table}[h]
\resizebox{\textwidth}{!}{
 \begin{tabular}{|l||l|c|c|c|c|c|c|}
 \hline
   Category & \#Benchmarks & \#Solved & Spec. size & Solution size & Iterations & Avg. time (s) & Total time (s) \\
   \hline
   \hline
   Superoptimisation & 29 & 22 & 19.0 & 4.1 & 2.7 & 7.9 & 166.1 \\
   \hline
   Termination & 78 & 33 & 93.5 & 5.7 & 14.4 & 11.8 & 390.4 \\
   \hline
   QBF (simple) & 4 & 4 & 12.2 & 9 & 1.0 & 1.8 & 7.1 \\
   \hline
   QBF (hard) & 7 & 1 & 5889.0 & 11.0 & 2.0 & 1.5 & 1.5 \\
   \hline
   \hline
   Total & 113 & 59 & 49116 & 295 & 536 & --- & 565.2 \\
   \hline
 \end{tabular}
}
\vspace{1em}

 \caption{Experimental results.\label{tbl:results}}
\end{table}

To help understand the role of the different solvers involved in the synthesis process, we provide
a breakdown of how often each solver ``won'', i.e.~was the first to return an answer.
This breakdown is shown in Table~\ref{tbl:wins}.
We see that GP and explicit account for the great majority of the responses, with the load spread
fairly evenly between them.  This distribution illustrates the different strengths of each solver:
GP is very good at generating candidates, explicit is very good at finding counterexamples
and {\sc CBMC} is very good at proving that candidates are correct.  The GP and explicit numbers are
similar because they are approximately ``number of candidates found'' and ``number of  candidates refuted''
respectively.  The {\sc CBMC} column is approximately ``number of candidates proved correct''.
The spread of winners here shows that each of the search strategies is contributing something to the
overall search and that the strategies are able to co-operate with each other.

\begin{table}
\centering
\subfloat[a][How often each solver ``wins''.]{
\begin{tabular}{|c|c|c|c|}
\hline
 {\sc CBMC} & Explicit & GP & Total \\
 \hline
 140 & 510 & 504 & 1183 \\
 12\% & 46\% & 42\% & 100\% \\
 \hline
\end{tabular}
\label{tbl:wins}
}
%
\qquad\qquad
\subfloat[b][Where the time is spent.]{
\begin{tabular}{|c|c|c|c|}
\hline
{\sc synth} & {\sc verif} & {\sc generalize} & Total \\
\hline
389.2\,s & 175.8\,s & 25.6\,s & 565.2\,s \\
69\% & 31\% & 5\% & 100\% \\
\hline
\end{tabular}
\label{tbl:time}
}
%
%
\caption{Statistics about the experimental results.}
\end{table}

To help understand where the time is spent in our solver, Table~\ref{tbl:time} shows how much time is
spent in {\sc synth}, {\sc verif} and constant generalization.  Note that
generalization counts towards {\sc verif}'s time.  We can see that synthesising candidates takes
longer than verifying them, but the ratio of around 2:1 is interesting in that neither phase
completely dominates the other in terms of runtime cost.  This suggests there is great potential
in optimising either of these phases.

\section{Conclusions and Future Work}

We have shown that Second-Order SAT is a very expressive logic occupying
a high complexity class.  
We have also demonstrated that it is
well suited to program verification by directly encoding safety and liveness
properties as Second-Order SAT formulae. Moreover, other applications, such as superoptimisation and QBF solving, map naturally
onto Second-Order SAT.

We built a decision procedure for Second-Order SAT problem via a reduction to 
finite state program synthesis. 
The synthesis algorithm is novel and uses a combination of symbolic model checking, explicit
state model checking and stochastic search. 
We have observed that for such a difficult problem, surprisingly many instances
can be solved fairly rapidly. This is because the runtime of the solver is strongly influenced by
the size of the shortest solution to the problem, and it seems that many real-world problems have
short solutions.



\paragraph{Future Work.}
%
There is a lot of scope for continuing work on Second-Order SAT.  There are many more problems
that can be encoded as Second-Order SAT, and many opportunities for building more efficient
solvers.
%
One potential application is generalisation. We note that many program analysis problems are concerned with finding properties
of a program that can be generalised, then finding ways of doing such a generalisation.  We conjecture
that \emph{every} successful generalisation strategy would correspond to going from a long proof
to a shorter one, and so consequently that there is a link between the class of tractable program
analysis problems and the class of Second-Order SAT problems with short solutions.


\bibliography{all,thesis}{}
\bibliographystyle{splncs}

\appendix

\section{Proofs}
\label{app:proofs}

\subsection{Program Encodings}
We encode finite-state programs as loop-free
imperative programs consisting of a sequence of instructions, each instruction
consisting of an opcode and a tuple of operands.  The opcode specifies which
operation is to be performed and the operands are the arguments on which the
operation will be performed.  We allow an operand to be one of: a constant literal,
an input to the program, or the result of some previous instruction.  Such a program
has a natural correspondence with a combinational circuit.

A sequence of instructions is certainly a natural encoding of a program,
but we might wonder if it is the \emph{best} encoding.
We can show that for a reasonable set of instruction types (i.e.~valid opcodes), this encoding
is optimal in a sense we will now discuss.
An encoding scheme $E$ takes a function $f$ and assigns it a name $s$.  For a given ensemble
of functions $F$ we are interested in the worst-case behaviour of the encoding $E$, that is we are
interested in the quantitiy $$|E(F)| = \max \{ |E(f)| \mid f \in F \}$$
If for every encoding $E'$, we have that $$|E(F)| = |E'(F)|$$ then we say that $E$ is an \emph{optimal encoding}
for $F$.  Similarly if for every encoding $E'$, we have $$O(|E(F)|) \subseteq O(|E'(F)|)$$ we say that $E$ is
an \emph{asymptotically optimal encoding} for $F$.

\begin{lemma}[Languages with ITE are Universal and Optimal Encodings for Finite Functions]
\label{lem:ite-size}
 For an imperative programming language including instructions
 for testing equality of two values (EQ) and an if-then-else
 (ITE) instruction, any total function $f : \mathcal{S} \to \mathcal{S}$
 can be computed by a program of size $O(| \mathcal{S} | \log | \mathcal{S} |)$ bits.
\end{lemma}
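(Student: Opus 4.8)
The plan is to construct, for an arbitrary total function $f : \mathcal{S} \to \mathcal{S}$, an explicit program realising $f$ as a cascade of equality tests, and then to bound the bit-length of this program by a careful instruction count. First I fix an encoding of the elements of $\mathcal{S}$ as bitvectors of width $w = \lceil \log_2 |\mathcal{S}| \rceil$, so that both the inputs and the outputs of $f$ are named by $O(\log|\mathcal{S}|)$-bit constants. Enumerate $\mathcal{S} = \{s_1, \ldots, s_n\}$ with $n = |\mathcal{S}|$.

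The program is then the decision list that, on input $x$, tests $x = s_1$ and returns the constant $f(s_1)$ if so, otherwise tests $x = s_2$ and returns $f(s_2)$, and so on, using the final element as the default branch. Each entry is realised by one EQ instruction (comparing the input against the constant $s_i$) feeding one ITE instruction (selecting between the constant $f(s_i)$ and the value computed by the remaining entries). This program has $O(n)$ instructions and correctly computes $f$, because for any input exactly one of the guards $x = s_i$ holds.

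It remains to bound the encoding size. Each instruction consists of an opcode (constant width), its constant operands, and references to previously computed values. The constants $s_i$ and $f(s_i)$ each occupy $w = O(\log|\mathcal{S}|)$ bits, and since the program contains only $O(n)$ instructions, every operand reference is an index into a list of length $O(n)$ and therefore also fits in $O(\log n) = O(\log|\mathcal{S}|)$ bits. Hence each of the $O(n)$ instructions contributes $O(\log|\mathcal{S}|)$ bits, and the whole program is encoded in $O(n \log|\mathcal{S}|) = O(|\mathcal{S}| \log|\mathcal{S}|)$ bits, as claimed.

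The construction itself is routine; the only point demanding care is the bit-accounting, and specifically the observation that because the total number of instructions is linear in $|\mathcal{S}|$, operand back-references never cost more than $O(\log|\mathcal{S}|)$ bits each. I expect this accounting to be the main obstacle, since a looser representation of operands (for instance naming a value by its full bitvector rather than by its position in the instruction list) would inflate the per-instruction cost and break the bound. I would also remark in passing that this upper bound meets the information-theoretic lower bound of $\log_2 |\mathcal{S}|^{|\mathcal{S}|} = |\mathcal{S}| \log_2 |\mathcal{S}|$ bits needed merely to name the $|\mathcal{S}|^{|\mathcal{S}|}$ distinct functions, which is precisely what makes the encoding asymptotically optimal and underlies Theorem~\ref{thm:l-concise}.
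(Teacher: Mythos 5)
Your proposal is correct and matches the paper's own proof essentially step for step: the same EQ/ITE decision-list cascade with one pair of instructions per element of $\mathcal{S}$, and the same bit-accounting argument that each operand is either an $O(\log|\mathcal{S}|)$-bit constant or a reference into a list of $O(|\mathcal{S}|)$ prior instructions, yielding the $O(|\mathcal{S}|\log|\mathcal{S}|)$ bound. Your closing remark about matching the $\log_2 |\mathcal{S}|^{|\mathcal{S}|}$ lower bound is exactly the content of the paper's companion Lemma~\ref{lem:large-encoding}, so nothing is missing.
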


\begin{proof}
The function $f$ is computed by the following program:

 \begin{verbatim}
t1 = EQ(x, 1)
t2 = ITE(t1, f(1), f(0))
t3 = EQ(x, 2)
t4 = ITE(t3, f(2), t2)
...
 \end{verbatim}

Each operand can be encoded in $\log_2 (| \mathcal{S} | + l) = \log_2 (3 \times | \mathcal{S} |)$ bits.
So each instruction can be encoded in $O(\log | \mathcal{S} |)$ bits and there are $O(|\mathcal{S}|)$
instructions in the program, so the whole program can be encoded in $O(| \mathcal{S} | \log | \mathcal{S} |)$
bits.
\end{proof}

\begin{lemma}
\label{lem:large-encoding}
 Any representation that is capable of encoding an arbitrary total function $f : \mathcal{S} \to \mathcal{S}$
 must require at least $O(| \mathcal{S} | \log | \mathcal{S} |)$ bits to encode some functions.
\end{lemma}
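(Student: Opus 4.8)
The plan is to establish a counting lower bound: show that there are ``too many'' distinct total functions $f : \mathcal{S} \to \mathcal{S}$ to be named by short strings, so any encoding scheme must assign a long name to at least one function. This is the information-theoretic dual of Lemma~\ref{lem:ite-size}, which gave the matching upper bound, so together they will pin down the worst-case encoding size at $\Theta(|\mathcal{S}| \log |\mathcal{S}|)$ bits.

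First I would count the functions to be encoded. A total function $f : \mathcal{S} \to \mathcal{S}$ is an arbitrary assignment of an element of $\mathcal{S}$ to each of the $|\mathcal{S}|$ elements of the domain, so the number of such functions is exactly $|\mathcal{S}|^{|\mathcal{S}|}$. Call this ensemble $F$. Next I would invoke the pigeonhole principle against the supply of short names. An encoding $E$ assigns each $f \in F$ a distinct binary string $E(f)$ (distinctness is forced because the encoding must be decodable: different functions cannot share a name). The number of binary strings of length strictly less than $b$ is $2^b - 1 < 2^b$. Hence if every function in $F$ received a name of length less than $b$, we would need $|\mathcal{S}|^{|\mathcal{S}|} < 2^b$, i.e.\ $b > |\mathcal{S}| \log_2 |\mathcal{S}|$. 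Contrapositively, for any $b \leq |\mathcal{S}| \log_2 |\mathcal{S}|$ some function must be assigned a name of length at least $b$, giving $|E(F)| \geq |\mathcal{S}| \log_2 |\mathcal{S}| = \Omega(|\mathcal{S}| \log |\mathcal{S}|)$ for \emph{every} encoding $E$.

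The main obstacle, and the point that needs to be stated carefully rather than calculated, is justifying the injectivity assumption on $E$. For the counting argument to be valid, the names $\{E(f) \mid f \in F\}$ must be distinct across distinct functions; otherwise one could ``cheat'' by reusing a short name for many functions. This is exactly the requirement that the encoding be uniquely decodable (one can recover $f$ from $E(f)$), which is the whole point of an encoding scheme, so I would make this requirement explicit in the statement of what an encoding is. A secondary subtlety is whether we count names of length exactly $b$ or at most $b$; I would work with a prefix-free or fixed-decoding convention so that the bound $\sum_{i < b} 2^i = 2^b - 1$ on the number of available short strings is clean, but any reasonable convention changes the bound only by an additive constant and leaves the $\Omega(|\mathcal{S}| \log |\mathcal{S}|)$ asymptotics intact.

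Finally I would note that this lower bound matches the upper bound of Lemma~\ref{lem:ite-size} up to constant factors, so no encoding can do asymptotically better in the worst case than the ITE-based program encoding; this is precisely the claim of asymptotic optimality defined just before the lemma, and it is the ingredient needed to support Theorems~\ref{thm:l-universal} and~\ref{thm:l-concise} in the main text.
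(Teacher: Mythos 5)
Your proposal is correct and follows exactly the paper's own argument: count the $|\mathcal{S}|^{|\mathcal{S}|}$ total functions and apply the pigeonhole principle to conclude that any (injective, hence decodable) encoding must assign some function a name of length at least $\log_2\bigl(|\mathcal{S}|^{|\mathcal{S}|}\bigr) = |\mathcal{S}| \log_2 |\mathcal{S}|$ bits. Your version is in fact more careful than the paper's, since you make explicit the injectivity requirement and the counting of short binary strings that the paper leaves implicit.
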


\begin{proof}
 There are $| \mathcal{S} |^{| \mathcal{S} |}$ total functions $f : \mathcal{S} \to \mathcal{S}$.
 Therefore by the pigeonhole principle, any encoding that can encode an arbitrary function must use
 at least $\log_2 (| \mathcal{S} |^{| \mathcal{S} |}) = O(| \mathcal{S} | \log_2 | \mathcal{S} |)$
 bits to encode some function.
\end{proof}

From Lemma~\ref{lem:ite-size} and Lemma~\ref{lem:large-encoding}, we can conclude that \emph{any}
set of instruction types that include ITE is an asymptotically optimal function encoding for total functions with
finite domains.

\subsection{Complexity of Finite State Program Synthesis}

\begin{theorem}[Second-Order SAT is Polynomial Time Reducible to Finite Synthesis]
 Every instance of Definition~\ref{def:2sat} is polynomial time reducible to an instance
 of Definition~\ref{def:finite-synth-formula}.
\end{theorem}

\begin{proof}
 We first Skolemise the instance of definition~\ref{def:2sat} to produce an equisatisfiable
 second-order sentence with the first-order part only having universal quantifiers
 (i.e. bring the formula into Skolem normal form).  This process will have introduced
 a function symbol for each first order existentially quantified variable and taken
 linear time.  Now we just existentially quantify over the Skolem functions, which
 again takes linear time and space.
 The resulting formula is an instance of Definition~\ref{def:finite-synth-formula}.
\end{proof}

\subsection{Soundness and Completeness}

\begin{theorem}\label{thm:synth-sound}
Algorithm~\ref{alg:cegis} is sound -- if it terminates with witness $P$, then
$P \models \sigma$.
\end{theorem}

\begin{proof}
 The procedure {\sc synth} terminates only if {\sc synth} returns ``valid''.  In that
 case, $\exists x . \lnot \sigma(P, x)$ is unsatisfiable and so $\forall x . \sigma(P, x)$ holds.
\end{proof}

\begin{theorem}
 \label{thm:synth-semi-complete}
 Algorithm~\ref{alg:cegis} is semi-complete -- if a solution $P \models \sigma$
 exists then Algorithm~\ref{alg:cegis} will find it.
\end{theorem}

\begin{proof}
 If the domain $X$ is finite then the loop in procedure {\sc synth} can only
 iterate $| X |$ times, since by this tProofsime all of the elements of $X$ would have been
 added to the inputs set.  Therefore if the {\sc synth} procedure always terminates,
 Algorithm~\ref{alg:cegis} does as well.

 Since the {\sc ExplicitSearch} routine enumerates all programs (as can be seen by induction on
 the program length $l$), it will eventually enumerate a program that meets the specification
 on whatever set of inputs are currently being tracked, since by assumption such a program
 exists.  Since the first-order theory is
 decidable, the query in {\sc verif} will succeed for this program, causing the algorithm to terminate.
 The set of correct programs is therefore recursively enumerable and Algorithm~\ref{alg:cegis}
 enumerates this set, so it is semi-complete.
\end{proof}

\begin{theorem}
 \label{thm:finite-synth-complete}
 Algorithm~\ref{alg:cegis} with the stopping condition described in Section~\ref{sec:stopping-condition}
 is complete when instantiated with \newC as a background theory -- it will terminate for all specifications $\sigma$.
\end{theorem}

\begin{proof}
 If the specification is satisfiable then Theorem~\ref{thm:synth-semi-complete} holds, and if it is not
 then the stopping condition will eventually hold at which point we (correctly) terminate with an ``unsatisfiable'' verdict.
\end{proof}

\end{document}